\newtheorem{theorem}{Theorem}
\newtheorem{proposition}{Proposition}
\begin{document}

\title{Bayesian Inference of Arrival Rate and Substitution Behavior from Sales Transaction Data with Stockouts}
\date{}
\author{Benjamin Letham}
\affil{Operations Research Center, Massachusetts Institute of Technology, bletham@mit.edu}
\author{Lydia M. Letham}
\affil{Department of Electrical Engineering and Computer Science, Massachusetts Institute of Technology, lmletham@mit.edu}
\author{Cynthia Rudin}
\affil{Computer Science and Artificial Intelligence Laboratory and Sloan School of Management, Massachusetts Institute of Technology, rudin@mit.edu}

\maketitle

\begin{abstract}
When an item goes out of stock, sales transaction data no longer reflect the original customer demand, since some customers leave with no purchase while others substitute alternative products for the one that was out of stock. Here we develop a Bayesian hierarchical model for inferring the underlying customer arrival rate and choice model from sales transaction data and the corresponding stock levels. The model uses a nonhomogeneous Poisson process to allow the arrival rate to vary throughout the day, and allows for a variety of choice models. Model parameters are inferred using a stochastic gradient MCMC algorithm that can scale to large transaction databases. We fit the model to data from a local bakery and show that it is able to make accurate out-of-sample predictions, and to provide actionable insight into lost cookie sales.
\end{abstract}

\section{Introduction}
An important common challenge facing retailers is to understand customer preferences in the presence of stockouts. When an item is out of stock, some customers will leave, while others will substitute a different product. From the transaction data collected by retailers, it is challenging to determine exactly what the customer's original intent was, or, because of customers that leave without making a purchase, even how many customers there actually were.

The task that we consider here is to infer both the customer arrival rate, including the unobserved customers that left without a purchase, and the substitution model, which describes how customers substitute when their preferred item is out of stock. Furthermore, we wish to infer these from sales transaction and stock level data, which data are readily available for many retailers. These quantities are a necessary input for inventory management and assortment planning problems.

Stockouts are a common occurrence in some retail settings, such as bakeries and flash-sale retailers \cite{Johnson14}. Not properly accounting for the data truncation caused by stockouts can lead to poor stocking decisions. Na\"{\i}vely estimating demand as the number of items sold underestimates the demand of items that stock out, while overestimating the demand of their substitutes. This could lead the retailer to set the stock for the substitute items too high, while leaving the stock of the stocked-out item too low, potentially losing customers and revenue. 

There are several key features of our model and inference that make it successful in problem settings where prior work in the area has not been. First, prior work has assumed the arrival rate to be constant within each time period \cite{Vulcano12}. Our model allows for arbitrary nonhomogeneous arrival rate functions, which is important for our bakery case study where sales have strong peaks at lunch time and between classes. Second, prior work has required a particular choice model \cite{Anupindi98, Vulcano12, Vulcano14}, whereas our model can incorporate whichever choice model is most appropriate. There are a wide variety of choice models, econometric models describing how a customer chooses one of several alternatives, with different properties and which are applicable in different settings. Third, we model multiple customer segments, each with their own substitution models which can be used to borrow strength across data from multiple stores. Fourth, unlike prior work which has used point estimates, our inference is fully Bayesian. Because we do full posterior inference, we are able to compute the posterior predictive distributions for decision quantities of interest, such as lost sales due to stock unavailability. This allows us to incorporate the uncertainty in estimation directly into uncertainty in our decision quantities, thus leading to more robust decisions.

Our contributions are four-fold. First, we develop a Bayesian hierarchical model that uses the censoring caused by stockouts and their induced substitutions to gain useful insight from transaction data. Our model is flexible and powerful enough to be useful in a wide range of retail settings. Second, we show how recent advances in MCMC for topic models can be adapted to our model to provide a sampling procedure that scales to large transaction databases. Third, we provide a simulation study which shows that we can recover the true generating values and which demonstrates the scalability of the inference procedure. Finally, we make available actual retail transaction data from a bakery\footnote{Data are available at http://github.com/bletham/bakery} and use these data for a case study showing how the model and sampling work in a real setting. In the case study we evaluate the predictive power of the model, and show that our model can make accurate out-of-sample predictions whereas the baseline method cannot. We finally show how the methods developed here can be useful for decision making by producing a posterior predictive distribution of the bakery's lost sales due to stock unavailability.

\section{The Generative Model}
We begin by introducing the notation that we use to describe the observed data. We then introduce the nonhomogeneous model for customer arrivals, followed by a discussion of various possible choice models. Section \ref{sec:mixtures} discusses how multiple customer segments are modeled. Finally, Section \ref{sec:likelihood} introduces the likelihood model and Section \ref{sec:prior} discusses the prior distributions.

\subsection{The Data}
We suppose that we have data from a collection of stores $\sigma=1,\ldots,S$. For each store, data come from a number of time periods $l=1,\ldots,L^{\sigma}$, throughout each of which time varies from $0$ to $T$. For example, in our experiments a time period was one day. We consider a collection of items $i=1,\ldots,n$. We suppose that we have two types of data: purchase times and stock levels. We denote the number of purchases of item $i$ in time period $l$ at store $\sigma$ as $m^{\sigma,l}_i$. Then, we let $\boldsymbol{t}^{\sigma,l}_{i} = \left\{t^{\sigma,l}_{i,1}, \ldots, t^{\sigma,l}_{i,m^{\sigma,l}_i}\right\}$ be the observed purchase times of item $i$ in time period $l$ at store $\sigma$. For notational convenience, we let $\boldsymbol{t}^{\sigma,l} = \left\{\boldsymbol{t}^{\sigma,l}_{i} \right\}_{i=1}^{n}$ be the collection of all purchase times for that store and time period, and let $\boldsymbol{t} = \left\{\boldsymbol{t}^{\sigma,l}\right\}_{\overset{{l=1,\ldots,L^{\sigma}}}{{\sigma=1,\ldots,S}}}$ be the complete set of arrival time data.

We denote the known initial stock level as $N_i^{\sigma,l}$ and assume that stocks are not replenished throughout the time period. That is, $m^{\sigma,l}_i \leq N_i^{\sigma,l}$ and equality implies a stockout. As before, we let $\boldsymbol{N}^{\sigma,l}$ and $\boldsymbol{N}$ represent respectively the collection of initial stock data for store $\sigma$ and time period $l$, and for all stores and all time periods.

Given $\boldsymbol{t}^{\sigma,l}_i$ and $N_i^{\sigma,l}$, we can compute a stock indicator as a function of time. We define this indicator function as
\begin{equation*}
s_i(t \mid \boldsymbol{t}^{\sigma,l}, \boldsymbol{N}^{\sigma,l}) =
\begin{cases}
0 & \textrm{if item }i\textrm{ is out of stock at time }t \\
1 & \textrm{if item }i\textrm{ is in stock at time }t. \\
\end{cases}
\end{equation*}

The generative model for these data will be that customers arrive at the store according to some arrival process. Each customer belongs to a particular segment, and chooses an item to purchase (or no-purchase) based on the preferences of his or her segment and the available stock. When the customer purchases item $i$, the arrival time is recorded in $\boldsymbol{t}_i^{\sigma, l}$. When a customer leaves without making a purchase, for instance because his or her preferred item is out of stock, the arrival time is not recorded. We now present the two main components of this model: the customer arrival process and the choice model.

\subsection{Modeling Customer Arrivals}
We model the times of customer arrivals using a nonhomogeneous Poisson process (NHPP). An NHPP is a generalization of the Poisson process that allows for the intensity to be described by a function $\lambda(t)\geq 0$ as opposed to being constant. We assume that the intensity function has been parameterized, with parameters $\boldsymbol{\eta}^{\sigma}$ potentially different for each store $\sigma$. The most basic parameterization is $\lambda(t \mid \boldsymbol{\eta}^{\sigma}) = \eta^{\sigma}_1$, producing a homogeneous Poisson process of rate $\eta^{\sigma}_1$. As another example, we can produce an intensity function that rises to a peak and then decays by letting
\begin{equation}\label{eq:hill}
\lambda(t \mid \boldsymbol{\eta}^{\sigma}) = \eta^{\sigma}_1 \left(\frac{\eta^{\sigma}_2}{\eta^{\sigma}_3} \right) \left(\frac{t}{\eta^{\sigma}_3} \right)^{\eta^{\sigma}_2-1} \left( 1+\left(\frac{t}{\eta^{\sigma}_3}\right)^{\eta^{\sigma}_2}\right)^{-2},
\end{equation}
which is the derivative of the Hill equation \cite{Goutelle08}.

The posterior of $\boldsymbol{\eta}^{\sigma}$ will be inferred. To do this we use the log-likelihood function for NHPP arrivals, which for arrival times $t_1, \ldots, t_m$ over interval $[0,T]$ is:
\begin{equation*}
\log p(t_1,\ldots,t_m) = \sum_{j=1}^{m} \log \left(\lambda(t_j) \right) -\Lambda(0,T),
\end{equation*}
where $\Lambda(0,T) = \int_{0}^{T} \lambda(t) dt$. Our model can incorporate any integrable rate function. We let $\boldsymbol{\eta} = \left\{\boldsymbol{\eta}^{\sigma}\right\}_{\sigma=1}^S$ represent the complete collection of rate function parameters to be inferred.

\subsection{Models for Substitution Behavior}
Whether or not a customer purchases an item and which item they purchase depends on the stock availability as well as some choice model parameters which we describe below. We define $f_i(s(t),\boldsymbol{\phi}^k, \tau^k)$ to be the probability that a customer purchases product $i$ given the current stock $s(t)$ and choice model parameters $\boldsymbol{\phi}^k$ and $\tau^k$. Then, we denote the no-purchase probability as 
\begin{equation*}
f_0(s(t),\boldsymbol{\phi}^k,\tau^k) = 1 - \sum_{i=1}^n f_i(s(t),\boldsymbol{\phi}^k,\tau^k).
\end{equation*}
The index $k$ indicates the parameters for a particular customer segment, which we will discuss in Section \ref{sec:mixtures}. Posterior distributions for the parameters $\boldsymbol{\phi}^k$ and $\tau^k$ are inferred.

Choice models are econometric models describing a customer's choice between several alternatives, often derived from a utility maximization problem. Different assumptions and utility models lead to different choice models, which ultimately lead to a different form of the purchase probability $f_i(s(t),\boldsymbol{\phi}^k, \tau^k)$. Our model accommodates any choice model for which the purchase probabilities can be expressed as a function of the current stock. We now discuss how several common choice models fit into this framework, and we use these choice models in our simulation and data experiments.

\subsubsection{Multinomial Logit Choice}
The multinomial logit (MNL) is a popular choice model with parameters $\phi^k_1,\ldots,\phi^k_n$ specifying a preference distribution over products, that is, $\phi^k_i \geq 0$ and $\sum_{i=1}^n \phi^k_i = 1$. Each customer selects a product according to that distribution. When an item goes out of stock, substitution takes place by transferring purchase probability to the other items proportionally to their original probability, including to the no-purchase option. This model requires a proportion $\tau^k/(1+\tau^k)$ of arrivals be no-purchases when all items are in stock. The MNL choice probabilities are:
\begin{equation}\label{eq:mnl}
f^{\textrm{mnl}}_i(s(t),\boldsymbol{\phi}^k) = \frac{s_i(t) \phi^k_i}{\tau^k + \sum_{v=1}^n s_v(t) \phi^k_{v}}.
\end{equation}
The MNL model parameter $\tau^k$ is not identifiable when the arrival function is also unknown, a serious disadvantage of this model \cite{Vulcano12}.

\subsubsection{Single-Substitution Exogenous Model}\label{sec:exo}
The exogenous choice model overcomes many of the shortcomings of the MNL model, including the issue of parameter identifiability. According to the exogenous proportional substitution model \cite{Kok07}, a customer samples a first choice from the preference distribution $\boldsymbol{\phi}^k$. If that item is available, he or she purchases the item. If the first choice is not available, with probability $1-\tau^k$ the customer leaves as no-purchase. With the remaining $\tau^k$ probability, the customer picks a second choice according to a preference vector that has been re-weighted to exclude the first choice. Specifically, if the first choice was $j$ then the probability of choosing $i$ as the second choice is $\phi^k_i/\sum_{v \neq j} \phi^k_v$. If the second choice is in stock it is purchased, otherwise the customer leaves as no-purchase. The purchase probability is
\begin{equation}\label{eq:exo}
f^{\textrm{ex}}_i(s(t),\boldsymbol{\phi}^k,\tau^k) = s_i(t)\phi^k_i + s_i(t) \tau^k  \sum_{j=1}^n (1-s_j(t)) \phi^k_j \frac{\phi^k_i}{\sum_{v\neq j} \phi^k_v}.
\end{equation}
Posterior distributions for both $\boldsymbol{\phi}^k$ and $\tau^k$ are inferred.

Allowing for the no-purchase option only in the event of stockouts means that the inferred arrival rate will be that of customers who actually would have purchased an item had all items been in stock. It would be possible for the exogenous model to include a proportion of customers that make no purchase even with full stock, as is required by the MNL model. However, inasmuch as these customers make no contribution to sales regardless of stock, it serves no purpose in the ultimate goal of understanding the effect of stock on sales.

\subsubsection{Nonparametric Choice Model}
Nonparametric models describe preferences as an ordered set of items. Let $\boldsymbol{\phi}^k$ be an ordered subset of the items $\{1,\ldots,n\}$. Customers purchase $\phi^k_1$ if it is in stock. If not, they purchase $\phi^k_2$ if it is in stock. If not, they continue substituting down $\boldsymbol{\phi}^k$ until they reach the first item that is available. If none of the items in $\boldsymbol{\phi}^k$ are available, they leave as a no-purchase. The purchase probability for this model is then $1$ for the first in-stock item in $\boldsymbol{\phi}^k$, and $0$ otherwise.

Because this model requires all customers to behave exactly the same, it is most useful when customers are modeled as coming from a number of different segments $k$, each with its own preference ranking $\boldsymbol{\phi}^k$. This is precisely what we do in our model, as we describe in the next section. For the nonparametric model the rank orders for each segment $\boldsymbol{\phi}^k$ are fixed and it is the distribution of customers across segments that is inferred. We do not generally need to consider all possible rank orders, as we discuss in the next section.

\subsection{Segments and Mixtures of Choice Models}\label{sec:mixtures}
We model customers as each coming from one of $K$ segments $k=1,\ldots,K$, each with its own choice model parameters $\boldsymbol{\phi}^k$ and $\tau^k$. Let $\boldsymbol{\theta}^{\sigma}$ be the customer segment distribution for store $\sigma$, with $\theta^{\sigma}_k$ the probability that an arrival at store $\sigma$ belongs to segment $k$, $\theta^{\sigma}_k \geq 0$, and $\sum_{k=1}^K \theta^{\sigma}_k = 1$. As with other variables, we denote the collection of segment distributions across all stores as $\boldsymbol{\theta}$. Similarly, we denote the collections of choice model parameters across all segments as $\boldsymbol{\phi}$ and $\boldsymbol{\tau}$.

For the nonparametric choice model, each of these segments would have a different rank ordering of items and multiple segments are required in order to have a diverse set of preferences. For the MNL and exogenous choice models, customer segments can be used to borrow strength across multiple stores. All stores share the same underlying segment parameters $\boldsymbol{\phi}$ and $\boldsymbol{\tau}$, but each store's arrivals are represented by a different mixing of these segments, $\boldsymbol{\theta}^{\sigma}$. This model allows us to use data from all of the stores for inferring the choice model parameters, while still allowing stores to differ from each other by having a different mixture of segments.

With the nonparametric choice model, using a segment for each ordered subset of $\{1,\ldots,n\}$ would likely result in more parameters than could be reasonably inferred for $n$ even moderately large. Our inference procedure would be most appropriate for nonparametric models with one or two substitutions (that is, ordered subsets of size 2 or 3), which could still capture a wide range of behaviors.

\subsection{The Likelihood Model}\label{sec:likelihood}
We now describe in detail the generative model for how customer segments, choice models, stock levels, and the arrival function all interact to create transaction data. Consider store $\sigma$ and time period $l$. Customers arrive according to the NHPP for this store. Let $\tilde{t}^{\sigma,l}_1, \ldots, \tilde{t}^{\sigma,l}_{\tilde{m}^{\sigma,l}}$ represent all of the arrival times; these are unobserved, as they may include no-purchases. Each arrival has probability $\theta^{\sigma}_k$ of belonging to segment $k$. They then purchase an item or leave as no-purchase according to the choice model $f_i$. If the $j$'th arrival purchases an item then we observe that purchase at time $\tilde{t}^{\sigma,l}_j$; if they leave as no-purchase we do not observe that arrival at all. The generative model for the observed data $\boldsymbol{t}$ is thus:

\vspace{5pt}
\noindent For store $\sigma=1,\ldots,S$ and time period $l=1,\ldots,L^{\sigma}:$
\begin{itemize}
\setlength\itemsep{0em}
\item Sample arrivals $\tilde{t}^{\sigma,l}_1,\ldots,\tilde{t}^{\sigma,l}_{\tilde{m}^{\sigma,l}} \sim \textrm{NHPP}(\lambda(t \mid \boldsymbol{\eta}^{\sigma}), T)$.
\item For arrival $j=1,\ldots,\tilde{m}^{\sigma,l}$:
\begin{itemize}
\item Sample segment as $k \sim \textrm{Multinomial}(\boldsymbol{\theta}^{\sigma})$.
\item With probability $f_i(s(\tilde{t}^{\sigma,l}_j \mid \boldsymbol{t}^{\sigma,l}, \boldsymbol{N}^{\sigma,l}),\boldsymbol{\phi}^k, \tau^k)$ purchase item $i$, or no purchase with $i=0$.
\item If item $i>0$ purchased, add the time to $\boldsymbol{t}_i^{\sigma,l}$.
\end{itemize}
\end{itemize}

We denote the probability that an arrival at time $t$ purchases item $i$ as $\pi_i(t) = \sum_{k=1}^K \theta^{\sigma}_k f_{i}(s(t \mid \boldsymbol{t}^{\sigma,l}, \boldsymbol{N}^{\sigma,l}),\boldsymbol{\phi}^k,\tau^k)$. An important quantity for the likelihood is the \textit{observed purchase rate}, which is the arrival rate times the purchase probability:
\begin{equation}\label{eq:obsp}
\tilde{\lambda}^{\sigma,l}_i(t) = \lambda(t \mid \boldsymbol{\eta}^{\sigma}) \pi_i(s(t \mid \boldsymbol{t}^{\sigma,l}, \boldsymbol{N}^{\sigma,l}),\boldsymbol{\phi},\boldsymbol{\tau}, \boldsymbol{\theta}^{\sigma}).
\end{equation}
This is the rate at which customers purchase item $i$, incorporating stock availability and customer choice. The corresponding mean function is $\tilde{\Lambda}^{\sigma,l}_i(0,T) = \int_{0}^T \tilde{\lambda}^{\sigma,l}_i(t) dt$.

The following theorem gives the likelihood function corresponding to this generative model.
\begin{theorem}\label{thm:loglklhd}
The log-likelihood function of $\boldsymbol{t}$ is
\begin{align*}
\log p(\boldsymbol{t} \mid & \boldsymbol{\eta}, \boldsymbol{\theta}, \boldsymbol{\phi}, \boldsymbol{\tau}, \boldsymbol{N}, T) \\
&= \sum_{\sigma=1}^S \sum_{l=1}^{L^{\sigma}} \sum_{i=1}^n \left( \sum_{j=1}^{m^{\sigma,l}_i} \log \left(\tilde{\lambda}^{\sigma,l}_i(t^{\sigma,l}_{i,j}) \right)
- \tilde{\Lambda}^{\sigma,l}_i(0,T) \right).
\end{align*}
\end{theorem}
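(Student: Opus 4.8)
The plan is to derive the likelihood by recognizing the observed item-$i$ purchases as a \emph{thinned and marked} version of the customer-arrival NHPP, and then to apply to each item stream the NHPP log-likelihood formula stated earlier for arrival times over $[0,T]$. I would work one store $\sigma$ and time period $l$ at a time, since the generative model draws these independently; the final sum over $\sigma$ and $l$ then follows by that independence. Fix such a pair and consider the multivariate counting process whose $i$th component counts the item-$i$ purchases on $[0,T]$.

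First I would sum out the latent segments. An arrival at time $t$ purchases item $i$ with probability $\pi_i(t)=\sum_{k=1}^K\theta^\sigma_k f_i(s(t),\boldsymbol\phi^k,\tau^k)$ once the segment label is marginalized, so the segment variables disappear from the observed-data likelihood. Next I would invoke the marking (coloring) property of Poisson processes: if each point of an NHPP with rate $\lambda(t\mid\boldsymbol\eta^\sigma)$ is independently assigned one of the outcomes $\{0,1,\ldots,n\}$ (no-purchase, or a purchased item) with location-dependent probabilities $\pi_0(t),\ldots,\pi_n(t)$, then the points of each outcome form independent NHPPs, the item-$i$ stream having rate $\tilde\lambda^{\sigma,l}_i(t)=\lambda(t\mid\boldsymbol\eta^\sigma)\pi_i(t)$ exactly as in (\ref{eq:obsp}). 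Discarding the unobserved no-purchase stream (outcome $0$) leaves the observed data as $n$ independent inhomogeneous streams. Applying the per-stream log-likelihood $\sum_j\log\tilde\lambda^{\sigma,l}_i(t^{\sigma,l}_{i,j})-\tilde\Lambda^{\sigma,l}_i(0,T)$ and summing over $i$, $l$, and $\sigma$ then yields the claimed expression.

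The main obstacle is that the marking probabilities $\pi_i(t)$ are \emph{not} genuinely exogenous functions of time: they depend on the stock indicator $s(t)$, which is itself determined by past purchases, so the marking is history-dependent and the elementary coloring theorem does not literally apply. I would resolve this in one of two equivalent ways. The clean elementary route is to condition on the realized stock sample path $s(\cdot)$; given this path the $\pi_i(t)$ become deterministic functions of $t$ (piecewise constant, changing only at the observed stockout times), the coloring theorem applies verbatim, and because $s(\cdot)$ is a measurable function of the observed purchase times and the known initial stocks $\boldsymbol N^{\sigma,l}$, the conditioning introduces nothing not already part of the data. The more general route is to treat the item streams as a multivariate point process and identify $\tilde\lambda^{\sigma,l}_i(t)$ as its conditional (predictable) intensity; the stated formula is then the standard log-likelihood of a point process written through its conditional intensity and compensator.

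A point worth checking carefully, and the reason the likelihood is computable at all, is that \emph{the unobserved no-purchases do not affect the stock}: stock is depleted only by actual purchases, all of which are observed. Consequently $s(t)$, and hence every $\tilde\lambda^{\sigma,l}_i(t)$ and every compensator $\tilde\Lambda^{\sigma,l}_i(0,T)$, is fully determined by the observed data, so marginalizing out the no-purchase stream leaves the intensities of the observed streams unchanged. Verifying this independence-of-censoring property is the step I would state most explicitly, since it is what lets us discard outcome $0$ without altering the rates of the remaining streams.
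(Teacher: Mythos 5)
Your proposal identifies the correct crux --- that no-purchases never deplete stock, so the conditional intensities $\tilde{\lambda}^{\sigma,l}_i(t)$ of the observed streams are measurable functions of the observed data alone, which is exactly what permits the unobserved stream to be marginalized away --- and your ``more general route'' via conditional (predictable) intensities and compensators is sound. It is in fact the paper's own argument written abstractly: the paper introduces the latent no-purchase times $\boldsymbol{t}^{\sigma,l}_0$, writes the joint density sequentially as $\prod_j p(\tilde{t}^{\sigma,l}_j \mid \tilde{t}^{\sigma,l}_{<j})\, p(\tilde{I}^{\sigma,l}_j \mid \tilde{t}^{\sigma,l}_{<j})$, regroups by item using $\Lambda(0,T \mid \boldsymbol{\eta}^{\sigma}) = \sum_{i=0}^n \tilde{\Lambda}^{\sigma,l}_i(0,T)$, and observes that the $i=0$ factor is an exact NHPP density in $\boldsymbol{t}^{\sigma,l}_0$ and hence integrates to one. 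That hand computation is precisely the ``product of conditional intensities times exponential of minus the compensator'' identity you invoke, together with the restriction to the observed filtration that your censoring-independence observation justifies. On this route your proof is correct and essentially the same as the paper's.

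Your ``clean elementary route,'' however, does not work as stated, and the two routes are not equivalent. Conditioning on the realized stock path $s(\cdot)$ is conditioning on a nontrivial functional of the very process whose density you are computing: the event that the stock path equals $s(\cdot)$ pins down, for example, that the $N^{\sigma,l}_i$-th purchase of item $i$ occurs at its observed stockout time, and the conditional law of the purchase times given this event is not the law produced by the coloring theorem with $\pi_i(\cdot)$ treated as exogenous. The coloring theorem requires marks assigned independently of the process with probabilities not depending on its history, whereas here the marks feed back into the probabilities through $s(t)$. The plug-in form of the final answer is a \emph{consequence} of the sequential conditioning argument, not something the coloring theorem delivers after conditioning on a sample path. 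If you want an elementary write-up, use the sequential decomposition (equivalently, your predictable-intensity route) and demote route (a) to heuristic motivation.
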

Remarkably, the result is that which would be obtained if we treated the purchases for each item as independent NHPPs with rate $\tilde{\lambda}_i^{\sigma,l}(t)$, the observed purchase rate from (\ref{eq:obsp}). In reality, they are not independent NHPPs inasmuch as they depend on each other via the stock function $s(t \mid \boldsymbol{t}^{\sigma,l}, \boldsymbol{N}^{\sigma,l})$. The key element of the proof is that while the purchase processes depend on each other, they do not depend on the no-purchase arrivals. The proof is given in the Appendix. Also in the Appendix we show how the mean function $\tilde{\Lambda}^{\sigma,l}_i(0,T)$ can be expressed in terms of $\Lambda(0,T \mid \boldsymbol{\eta}^{\sigma})$ and thus computed efficiently, provided the rate function is integrable.

\subsection{Prior Distributions}\label{sec:prior}
Finally, we specify a prior distribution for each of the latent variables: $\boldsymbol{\eta}$, $\boldsymbol{\theta}$, and  $\boldsymbol{\phi}$ and $\boldsymbol{\tau}$ as required by the choice model. The variables $\boldsymbol{\theta}$, $\boldsymbol{\phi}$, and $\boldsymbol{\tau}$ are all probability vectors, so the natural choice is to assign them a Dirichlet or Beta prior:
\begin{align*}
\boldsymbol{\theta} &\sim \textrm{Dirichlet}(\boldsymbol{\alpha})\\
\boldsymbol{\phi}^k &\sim \textrm{Dirichlet}(\boldsymbol{\beta}),\quad k=1,\ldots,K\\
\tau^k &\sim \textrm{Beta}(\boldsymbol{\gamma}),\quad k=1,\ldots,K.
\end{align*}
In our experiments, we used uniform priors by setting the hyperparameters to vectors of $1$. Similarly, a natural choice for the prior distribution of $\boldsymbol{\eta}$ is a uniform distribution for each element: 
\begin{equation*}
\eta^{\sigma}_v \sim \textrm{Uniform}(\boldsymbol{\delta}^v),\quad v=1,\ldots,|\eta^{\sigma}|,\quad \sigma=1,\ldots,S.
\end{equation*}
In our experiments we chose the interval $\boldsymbol{\delta}^v$ large enough to not be restrictive.


\section{MCMC Sampling}
We use MCMC techniques to simulate posterior samples, specifically the stochastic gradient Riemannian Langevin dynamics (SGRLD) algorithm of \cite{patterson13}. SGRLD was developed for posterior inference in topic models, to which our model is conceptually similar. It uses a stochastic gradient that does not require the full likelihood function to be evaluated in every MCMC iteration, which is critical for doing posterior inference on a potentially very large transaction database. 

We first transform each of the probability variables using the expanded-mean parameterization \cite{patterson13}. Consider the latent variable $\boldsymbol{\theta}$, which has as constraints $\theta_k \geq 0$ and $\sum_{k=1}^K \theta_k = 1$. Take $\boldsymbol{\tilde{\theta}}$ a random variable with support on $\mathbb{R}_{+}^K$, and give $\boldsymbol{\tilde{\theta}}$ a prior distribution consisting of a product of $\textrm{Gamma}(\alpha_k,1)$ distributions:
\begin{equation*}
p(\boldsymbol{\tilde{\theta}} \mid \boldsymbol{\alpha}) \propto \prod_{k=1}^K \tilde{\theta}_k^{\alpha_k-1} \exp(-\tilde{\theta}_k).
\end{equation*}
The posterior sampling is done over variables $\boldsymbol{\tilde{\theta}}$ by mirroring any negative proposal values about $0$. We then set $\theta_k = \tilde{\theta}_k/\sum_{r=1}^K  \tilde{\theta}_r$.
This parameterization is equivalent to sampling on $\boldsymbol{\theta}$ with a $\textrm{Dirichlet}(\alpha)$ prior, but does not require the probability simplex constraint. The same transformation is done to $\boldsymbol{\phi}^k$ and $\tau^k$.

Let $\boldsymbol{z} = \{\boldsymbol{\eta}, \boldsymbol{\tilde{\theta}},\boldsymbol{\tilde{\phi}},\boldsymbol{\tilde{\tau}}\}$ represent the complete collection of transformed latent variables whose posterior distribution we are inferring. From state $\boldsymbol{z}_w$ on MCMC iteration $w$, the next iteration moves to the state $\boldsymbol{z}_{w+1}$ according to
\begin{align*}
\boldsymbol{z}_{w+1} &= \boldsymbol{z}_w \\
&\quad+ \frac{\epsilon_w}{2} \left( \textrm{diag}(\boldsymbol{z}_w) \nabla \log p(\boldsymbol{z}_w \mid \boldsymbol{t}, \boldsymbol{\alpha}, \boldsymbol{\beta}, \boldsymbol{\gamma}, \boldsymbol{\delta}, \boldsymbol{N}, T)  +\boldsymbol{1} \right) \\
&\quad+ \textrm{diag}(\boldsymbol{z}_w)^{\frac{1}{2}} \psi,\\
\psi &\sim \mathcal{N}(0,\epsilon_w I).
\end{align*}
The iteration performs a gradient step plus normally distributed noise, using the natural gradient of the log posterior, which is the manifold direction of steepest descent using the metric $G(\boldsymbol{z}) = \textrm{diag}(\boldsymbol{z})^{-1}$. Using Bayes' theorem, the posterior gradient can be decomposed into the likelihood gradient and the prior gradient, and
we use a stochastic gradient approximation for the likelihood gradient. On MCMC iteration $w$, rather than use all $L^{\sigma}$ time periods to compute the gradient we use a uniformly sampled collection of time periods $\mathcal{L}^{\sigma}_w$. The gradient approximation is then
\begin{align*}
\nabla & \log p(\boldsymbol{t} \mid \boldsymbol{z}_w,\boldsymbol{N}, T) \\
&\approx \sum_{\sigma=1}^S \frac{L^{\sigma}}{|\mathcal{L}_w^{\sigma}|} \sum_{l \in \mathcal{L}_w^{\sigma}} \sum_{i=1}^n \nabla \left( \sum_{j=1}^{m^{\sigma,l}_i} \log \left(\tilde{\lambda}^{\sigma,l}_i(t^{\sigma,l}_{i,j}) \right)
- \tilde{\Lambda}^{\sigma,l}_i(0,T) \right).
\end{align*}
The iterations will converge to the posterior samples if the step size schedule is chosen such that $\sum_{w=1}^{\infty} \epsilon_w = \infty$ and $\sum_{w=1}^{\infty} \epsilon^2_w < \infty$ \cite{welling11}. In our simulations and experiments we used three time periods for the stochastic gradient approximations. We followed \cite{patterson13} and took $\epsilon_w = a((1+q/b)^{-c})$, with parameters $a$, $b$, and $c$ chosen using cross-validation over a grid to minimize out-of-sample perplexity. We drew 10,000 samples from each of three chains initialized at a local maximum  \textit{a posteriori} solution found from a random sample from the prior. We verified convergence using the Gelman-Rubin diagnostic after discarding the first half of the samples as burn-in \cite{Gelman92}, and then merged samples from all three chains to estimate the posterior.

\begin{figure}[t]
\centering
\includegraphics[scale=0.88]{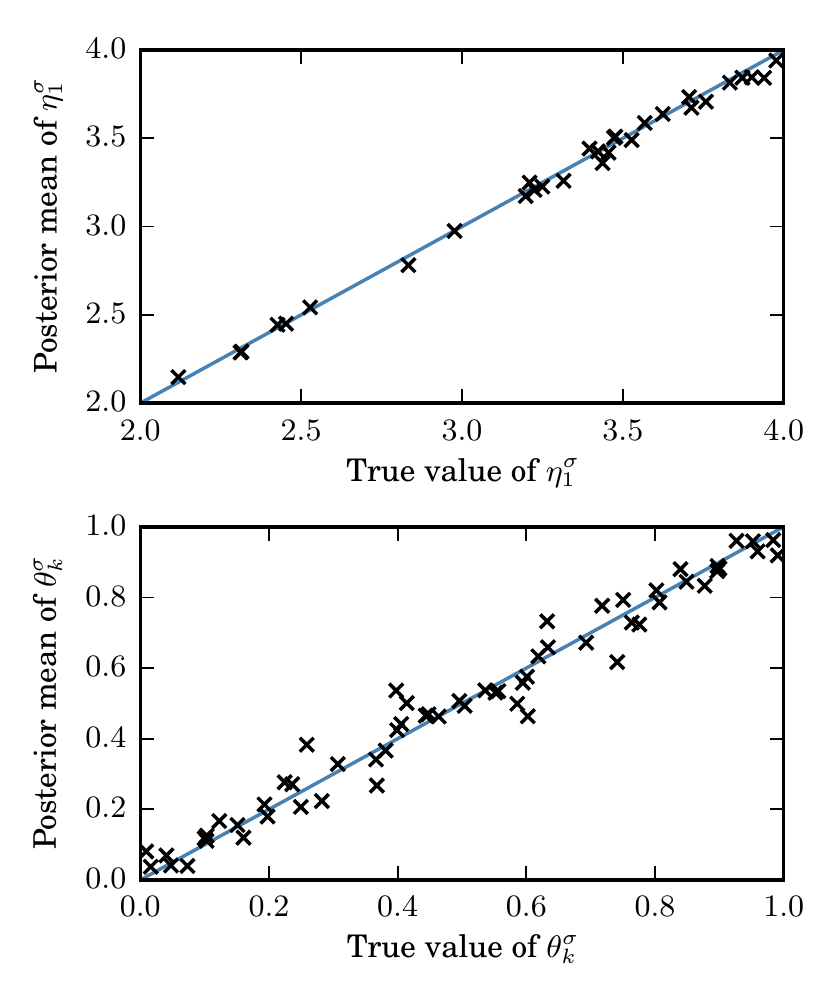}
\caption{Markers in the top panel show, for each randomly chosen value of $\eta^{\sigma}_1$ used in the set of simulations (3 stores $\times$ 10 simulations), the corresponding estimate of the posterior mean. The bottom panel shows the same result for each value of $\theta^{\sigma}_k$ used (3 stores $\times$ 2 segments $\times$ 10 simulations). For a range of generating parameter values, the posterior distributions were centered on the true values.} \label{fig:sim1_5}
\end{figure}

\begin{figure}[t]
\centering
\includegraphics[scale=0.88]{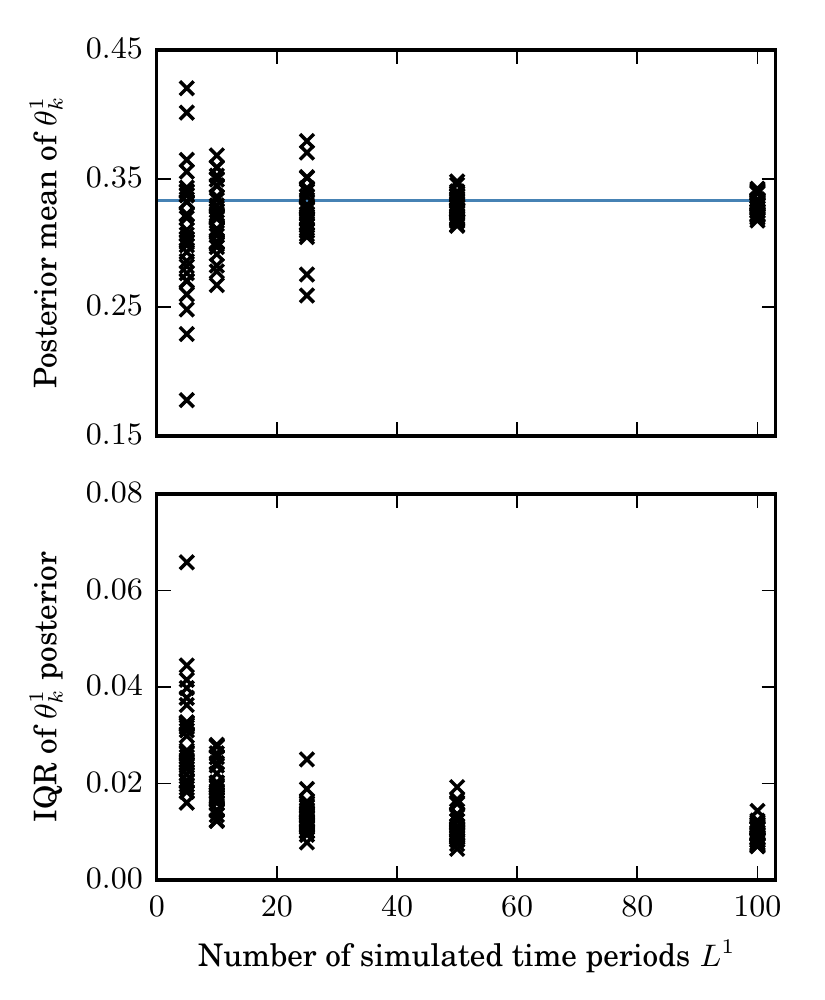}
\caption{Each marker corresponds to the posterior distribution for $\theta^1_k$ from a simulation with the corresponding number of time periods, across the 3 values of $k$ where the true value equaled $0.33$. The top panel shows the posterior mean for each of the simulations across the different number of time periods. The bottom panel shows the interquartile range (IQR) of the posterior. As the amount of available data increased, the posterior distributions became increasingly concentrated on the true values.} \label{fig:sim3_3}
\end{figure}

\section{Simulation Study}
We use a collection of simulations to illustrate and analyze the model and the inference procedure. We use a variety of rate functions and choice models throughout the simulations to demonstrate this flexibility of the model. First we use the simulations to verify that the posterior concentrates around the true generating values for a wide selection of arrival rate functions, choice models, and model parameters. Then we use simulations to investigate the dependence on the amount of data used in the inference. The simulations show that the posterior variance decreases as the size of the training data set increases, which is remarkable inasmuch as the reduction of uncertainty came with no additional computational cost because of the stochastic gradient approximation for the likelihood.

The first set of simulations used the homogeneous rate function $\lambda(t \mid \boldsymbol{\eta}^{\sigma}) = \eta^{\sigma}_1$ and the exogenous choice model given in (\ref{eq:exo}), with $S=3$ stores, $K=2$ segments, and $n=3$ items. The choice model parameters were fixed at $\tau^1 = \tau^2 = 0.75$, $\boldsymbol{\phi}^1 = [0.75, 0.2, 0.05]$, and  $\boldsymbol{\phi}^2 = [0.33, 0.33, 0.34]$. For each of 10 simulated data sets, the segment distributions $\boldsymbol{\theta}^\sigma$ were chosen independently at random from a uniform Dirichlet distribution and the arrival rates $\eta^{\sigma}_1$ were chosen independently at random from a uniform distribution on $[2,4]$. For each store, we simulated $25$ time periods, each of length $T=1000$ and with the initial stock for each item chosen uniformly between $0$ and $500$, independently at random for each item, time period, and store. Purchase data were then generated according to the generative model in Section \ref{sec:likelihood}. Figure \ref{fig:sim1_5} shows the posterior means estimated from the MCMC samples across the $10$ repeats of the simulation, each with different segment distributions and rate parameters. This figure shows that across the full range of parameter values used in these simulations the posterior mean was close to the true, generating value.

In the second set of simulations we used the Hill rate function with the nonparametric choice model, with 3 items. We used all sets of preference rankings of size $1$ and $2$, which for $3$ items requires a total of $9$ segments. We simulated data for a single store, with the segment proportion $\theta^1_k$ set to $0.33$ for preference rankings $\{1\}$, $\{1, 2\}$, and $\{3, 2\}$: The first segment prefers item $1$ and will leave with no purchase if item $1$ is not available, the second segment prefers item $1$ but is willing to substitute to item $2$, and the third segment prefers item $3$ but is willing to substitute to item $2$. The segment proportions for the remaining $6$ preference rankings were set to zero. With this simulation we study the effect of the number of time periods used in the inference, $L^1$. $L^1$ was taken from $\{5,10,25,50,100\}$, and for each of these values 10 simulations were done.

As in Figure \ref{fig:sim1_5}, the posterior densities for the segment proportions were concentrated near their true values. Figure \ref{fig:sim3_3} shows how the posteriors depended on the number of time periods of available data. The top panel shows that the posterior means for the non-zero segment proportions tended closer to the true value as more data were made available. The bottom panel shows the actual concentration of the posterior, where the interquartile range of the posterior decreased with the number of time periods. Because we use a stochastic gradient approximation, using more time periods came at no additional computational cost: We used 3 time periods for each gradient approximation regardless of the available number.

\section{Case Study: Bakery Sales}\label{sec:dataexp}
We now provide the results of the model applied to real transaction data. As part of our case study, we evaluate the predictive power of the model and sample the posterior distribution of lost sales due to stockouts.

We obtained one semester of sales data from the bakery at 100 Main Marketplace, a cafe located at MIT, for a collection of cookies: oatmeal, double chocolate, and chocolate chip. The data set included all purchase times for 151 days; we treated each day as a time period (11:00 a.m. to 7:00 p.m.), and there were a total of 4084 purchases. Stock data were not available, only purchase times, so for the purpose of these experiments we set the initial stock for each time period equal to the number of purchases for the time period - thus every item was treated as stocked out after its last recorded purchase. This may be a reasonable assumption for these cookies given that they are perishable baked goods which are meant to stock out by the end of the day, but in any case the experiments still provide a useful illustration of the method.

The empirical purchase rate for the cookies, shown in Figure \ref{fig:lnch_2}, was markedly nonhomogeneous: there is a broad peak at lunch time and two sharp peaks at common class ending times. We modeled the rate function with a combination of the Hill function $\lambda^{\textrm{H}}(t)$ (\ref{eq:hill}) and a fixed function consisting of only two peaks at the two afternoon peak times, $\lambda^{\textrm{p}}(t)$, obtained via a spline. The Hill function has three parameters, and then a fourth parameter provided the weight of the fixed peaks that were added in: $\lambda(t \mid \boldsymbol{\eta}) = \lambda^{\textrm{H}}(t \mid \eta_1, \eta_2, \eta_3) + \eta_4 \lambda^{\textrm{p}}(t)$. We fit the model separately with the exogenous and nonparametric choice models.

Figure \ref{fig:lnch_2} shows 20 posterior samples for the model's predicted average purchase rate over all time periods, which equals $\frac{1}{151} \sum_{l=1}^{151} \sum_{i=1}^3 \tilde{\lambda}^{1,l}_i$, from the fit with the nonparametric choice model. These samples show that the model provides an accurate description of the arrival rate. The variance in the samples provides an indication of the uncertainty in the model, which further motivates the use of the posterior predictive distribution over a point estimate for making predictions.

\begin{figure}[t]
\centering
\includegraphics[scale=0.88]{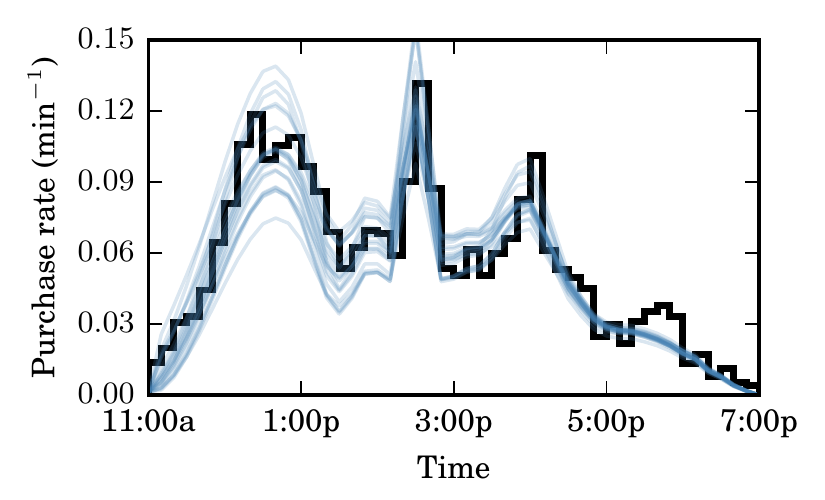}
\caption{A normalized histogram of purchase times for the cookies, across time periods, along with posterior samples for the model's corresponding predicted purchase rate.} \label{fig:lnch_2}
\end{figure}

Figure \ref{fig:lnch_3} shows the posterior density for the substitution rate $\boldsymbol{\tau}$, obtained by fitting the model with the exogenous choice model. The substitution rate is very low, indicating that most customers left without a purchase if their preferred cookie was not in stock. The posterior distribution of the item preference vector is given in \ref{fig:lnch_4}. Chocolate chip cookies were the strong favorite, followed by double chocolate and lastly oatmeal. 

\begin{figure}[t]
\centering
\includegraphics[scale=0.88]{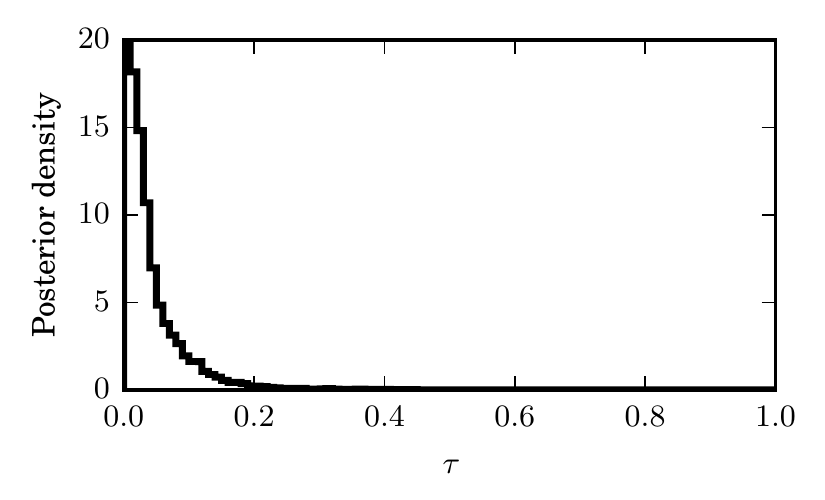}
\caption{Normalized histogram of posterior samples of the exogenous choice model substitution rate, for the cookie data. This is the probability that a customer will substitute if his or her preferred item is out of stock.} \label{fig:lnch_3}
\end{figure}

\begin{figure*}[h]
\centering
\includegraphics[scale=0.88]{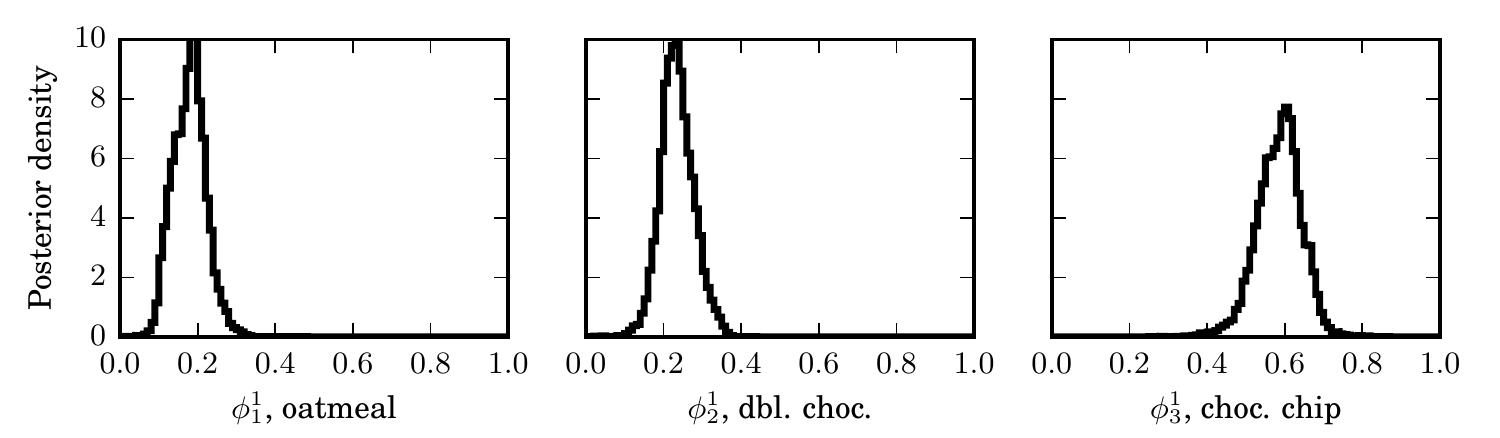}
\caption{Normalized histograms of posterior samples of the exogenous choice model preference vector, for the cookie data. This vector provides the probability that each item is a customer's primary choice.} \label{fig:lnch_4}
\end{figure*}

\begin{figure*}[t]
\centering
\includegraphics[scale=0.88]{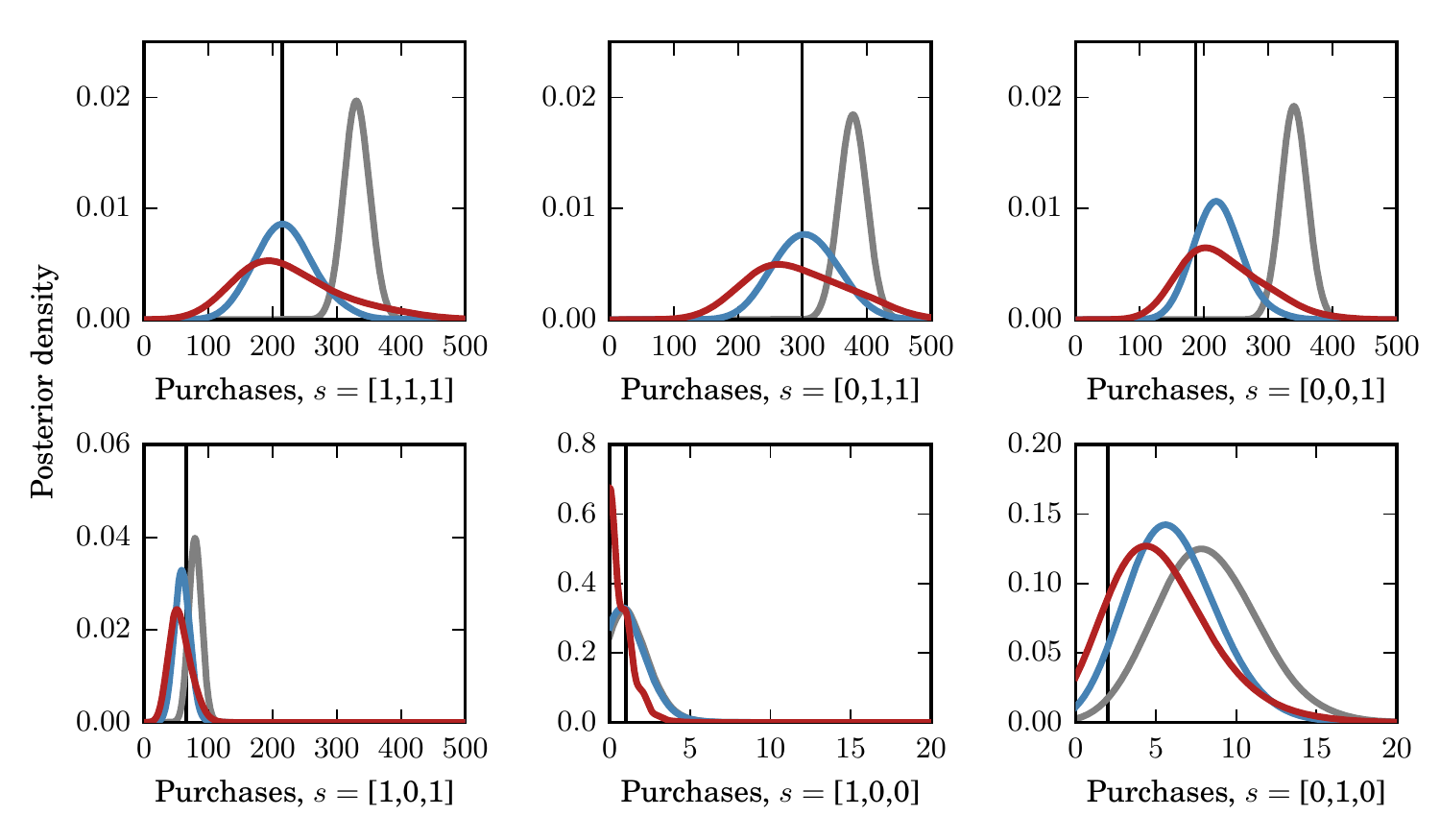}
\caption{Smoothed posterior densities for the number of purchases during test set intervals with the indicated stock availability for cookies [oatmeal, double chocolate, chocolate chip]. The density in blue is for the nonparametric choice, red is for the exogenous choice, and gray is for the baseline homogeneous arrival rate with MNL choice. The vertical line indicates the true value.} \label{fig:pred_1}
\end{figure*}

\subsection{Predictive Performance}\label{sec:predictions}
The next set of experiments establish that the model has predictive power on real data. We evaluated the predictive power of the model by predicting out-of-sample purchase counts during periods of varying stock availability. We took the first $80\%$ of time periods (120 time periods) as training data and did posterior inference. The latter 31 time periods were held out as test data, the goal being to use data from the first part of the semester to make predictions about the latter part. We considered each possible level of stock unavailability, \textit{i.e.}, $s=[1, 0, 0]$, $s=[0, 1, 0]$, etc. For each stock level, we found all of the time intervals in the test periods with that stock. The prediction task was, given only the time intervals and the corresponding stock level, to predict the total number of purchases that took place during those time intervals in the test periods. The actual number of purchases is known and thus predictive performance can be evaluated. There were no intervals where only chocolate chip cookies were out of stock, but predictions were made for every other stock combination.

This is a meaningful prediction task because good performance requires being able to accurately model both the arrival rate as a function of time and how the actual purchases then depend on the stock. We compare predictive performance to a baseline model that has previously been proposed for this problem by \cite{Vulcano12}, which is the maximum likelihood model with a homogeneous arrival rate and the MNL choice model. We discuss this and other related works in more detail in Section \ref{sec:relatedwork}.

For the MNL baseline, the parameter $\tau^1$ is unidentifiable and cannot be estimated. We fit the model for each fixed $\tau^1 \in \{0.1, 0.2, \ldots, 0.9\}$, and show here the results with the value of $\tau^1$ that minimized the out-of-sample absolute deviation between the model expected number of purchases and the true number of purchases, which was $0.4$. That is, we show here the results that would have been obtained if we had known \textit{a priori} the best value of $\tau^1$, and thus show the best possible performance of the baseline.

For our model, for each choice model (nonparametric and exogenous) posterior samples obtained from the MCMC procedure were used to estimate the posterior predictive distribution for the number of purchases under each stock level. For the maximum likelihood baseline, we used simulation to estimate the distribution of purchase counts conditioned on the point estimate model. These posterior densities, smoothed with a kernel density estimate, are given in Figure \ref{fig:pred_1}. Despite their very different natures, the predictions made by the exogenous and  nonparametric models are quite similar, and both have posterior means close to the true values for all stock levels. The baseline maximum likelihood model with a homogeneous arrival rate and MNL choice performs very poorly.

\subsection{Lost Sales Due to Stockouts}

\begin{figure*}[t]
\centering
\includegraphics[scale=0.88]{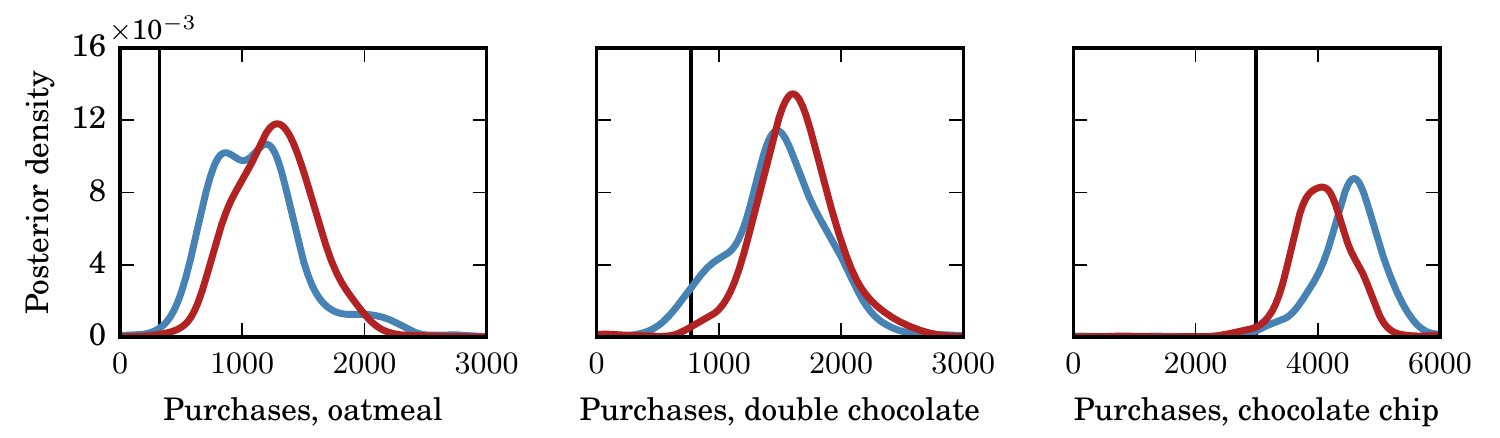}
\caption{Smoothed posterior densities for the number of purchases during all periods, had there had been no stockouts. The blue density is the result with the nonparametric choice model, and the red with the exogenous. The vertical line indicates the number of purchases in the data.} \label{fig:pred_4}
\end{figure*}

Our purpose in inferring the model is to use it to make better stocking decisions. An important starting point is to use the inferred parameters to estimate what the sales would have been had there not been any stockouts. This allows us to know how much revenue is being lost with our current stocking strategy. We estimated posterior densities for the number of purchases of each item across 151 time periods, with full stock. Figure \ref{fig:pred_4} compares those densities to the actual number of cookie purchases in the data.

For each of the cookies, the actual number of purchases was significantly less than the posterior density for purchases with full stock, indicating that there were substantial lost sales due to stockouts. With the nonparametric model, the difference between the full-stock posterior mean and the actual number of purchases was 791 oatmeal cookies, 707 double chocolate cookies, and 1535 chocolate chip cookies. Figure \ref{fig:lnch_3} shows that customers were generally unwilling to substitute, which would have contributed to the lost sales.

\section{Related Work}\label{sec:relatedwork}
The primary work on this problem, estimating demand and substitution from sales transaction data with stockouts and unobserved no-purchases, was done by \cite{Vulcano12}. They model customer arrivals using a homogeneous Poisson process within each time period, meaning the arrival rate is constant throughout each time period. Customers then choose an item, or an unobserved no-purchase, according to the MNL choice model. They derive an EM algorithm to solve the corresponding maximum likelihood problem. In the prediction task of Section \ref{sec:predictions} we compared our results with this model as the baseline and found that it was unable to make accurate predictions with our case study data. Our model overcomes several limitations of this model, thereby substantially advancing the power of the inference and the settings in which the model can be used. First, Figure \ref{fig:lnch_2} shows that the arrivals are significantly nonhomogeneous throughout the day, and modeling the arrival rate as constant throughout the day is likely the reason the baseline model failed the prediction task. The work in \cite{Vulcano12} proposes extending their model to a nonhomogeneous setting by choosing sufficiently small time periods that the arrival rate can be approximated as piecewise constant. However, with the level of nonhomogeneity seen in Figure \ref{fig:lnch_2} it is implausible that accurate estimation could be done for the number of segments (and thus separate rate parameters) required to model the arrival rate with a piecewise-constant function. Second, our model does not require using the MNL choice model, which avoids the issue with the parameter $\tau$ being unidentifiable. This parameter represents the proportion of arrivals that do not purchase anything even when all items are in stock, and is not something that a retailer would necessarily know. Finally, we take a Bayesian approach to inference and produce posterior predictive distributions. This becomes especially important in this setting where the parameters themselves are of secondary interest to using the model to make predictions about lost revenue and to make decisions about stocking strategies. 

Other work in this area includes \cite{Anupindi98}, where customer arrivals are modeled with a homogeneous Poisson process and purchase probabilities are modeled explicitly for each stock combination, as opposed to using a choice model. Their model does not scale well to a large number of items as the likelihood expression includes all stock combinations found in the data. The work of \cite{Vulcano12} is extended in \cite{Vulcano14} to incorporate nonparametric choice models, for which maximum likelihood estimation becomes a large-scale concave program that must be solved via a mixed integer program subproblem. There is a large body of work on estimating demand and choice in settings different than that which we consider here, such as discrete time \cite{Talluri01, Vulcano10}, panel or aggregate sales data \cite{Campoa03, Kalyanam07, Musalem10}, negligible no purchases \cite{Kok07}, and online learning with simultaneous ordering decisions \cite{Jain15}. These models and estimation procedures do not apply to the setting that we consider here, which is retail transaction data with stockouts and unobserved no-purchases; \cite{Jain15} provide a review of the various threads of research in the larger field of demand and choice estimation.

Our work fits into a growing body of work in advancing the use of statistics in areas of business. These areas include marketing \cite{Green66, Soriano13, Banks06}, market analysis \cite{Finazzi13, Rubin06}, demand forecasting \cite{Liu01, Shen08}, and pricing \cite{Ghose06, letham14}. These works, and ours, address a real need for rigorous statistical methodologies in business, as well as a substantial opportunity for impact.

\section{Discussion}
We have developed a Bayesian model for inferring primary demand and consumer choice in the presence of stockouts. The model can incorporate a realistic model of the customer arrival rate, and is flexible enough to handle a variety of different choice models. Our model is conceptually related to topic models like latent Dirichlet allocation \cite{Blei03}. Variants of topic models are regularly applied to very large text corpora, with a large body of research on how to effectively infer these models. That research was the source of the stochastic gradient MCMC algorithm that we used, which allows inference from even very large transaction databases.

The simulation study showed that when data were actually generated from the model, we were able to recover the true generating values. It further showed that the posterior bias and variance decreased as more data were made available, an improvement that came without any additional computational cost due to the stochastic gradient.

In the case study we applied the model and inference to real sales transaction data from a local bakery. The daily purchase rate in the data was clearly nonhomogeneous, with several peak periods. These data clearly demonstrated the importance of modeling nonhomogeneous arrival rates in retail settings. In a prediction task that required accurate modeling of both the arrival rate and the choice model, we showed that the model was able to make accurate predictions and significantly outperform the baseline approach.

Finally, we showed how the model can be used to estimate lost sales due to stockouts. The posterior provided evidence of substantial lost cookie sales. The model and inference procedure we have developed provide a new level of power and flexibility that will aid decision makers in using transaction data to make smarter decisions.

\section{Acknowledgements}
We are grateful to the staff at 100 Main Marketplace at the Massachusetts Institute of Technology who provided data for this study.


\appendix
\allowdisplaybreaks
Here we provide the derivation of the log-likelihood function. The proof will use two results which themselves are straightforward to show.

\begin{proposition}
\begin{equation*}
p(t_j \mid  t_{j-1}, \boldsymbol{\eta}^{\sigma}) = \exp(-\Lambda(t_{j-1},t_j \mid \boldsymbol{\eta}^{\sigma})) \lambda(t_j \mid \boldsymbol{\eta}^{\sigma}).
\end{equation*}
\end{proposition}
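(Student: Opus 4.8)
The plan is to obtain the conditional density as (minus) the derivative of a survival function, relying only on the defining properties of the nonhomogeneous Poisson process. First I would recall that for an NHPP with intensity $\lambda(\cdot \mid \boldsymbol{\eta}^{\sigma})$ the number of arrivals in any interval $(a,b]$ is Poisson distributed with mean $\Lambda(a,b \mid \boldsymbol{\eta}^{\sigma}) = \int_a^b \lambda(u \mid \boldsymbol{\eta}^{\sigma})\,du$, and that the counts over disjoint intervals are independent. The independence of increments is what lets me condition on the arrival at $t_{j-1}$: the behavior of the process on $(t_{j-1}, \infty)$ does not depend on the history up to $t_{j-1}$, so the event $\{t_j > t\}$ is exactly the event that there are no arrivals in $(t_{j-1}, t]$.

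With this in hand I would write the survival function of the next arrival time. The probability of zero arrivals in $(t_{j-1}, t]$ is the Poisson null probability,
\[
P(t_j > t \mid t_{j-1}, \boldsymbol{\eta}^{\sigma}) = \exp\!\left(-\Lambda(t_{j-1}, t \mid \boldsymbol{\eta}^{\sigma})\right),
\]
so the conditional distribution function is $1 - \exp(-\Lambda(t_{j-1}, t \mid \boldsymbol{\eta}^{\sigma}))$. Differentiating with respect to $t$ and using the fundamental theorem of calculus, which gives $\frac{d}{dt}\Lambda(t_{j-1}, t \mid \boldsymbol{\eta}^{\sigma}) = \lambda(t \mid \boldsymbol{\eta}^{\sigma})$, yields
\[
p(t_j \mid t_{j-1}, \boldsymbol{\eta}^{\sigma}) = \exp\!\left(-\Lambda(t_{j-1}, t_j \mid \boldsymbol{\eta}^{\sigma})\right) \lambda(t_j \mid \boldsymbol{\eta}^{\sigma}),
\]
which is the claim.

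There is no real obstacle here; the one point requiring care is the invocation of independence of increments to reduce the conditional law of $t_j$ to the count distribution on the single interval $(t_{j-1}, t]$, after which the result is a one-line differentiation. I would also note that integrability of $\lambda$, which the paper already assumes, is exactly what is needed for $\Lambda$ to be well defined and for the fundamental theorem of calculus to apply in the differentiation step.
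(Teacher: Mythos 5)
Your proof is correct. The paper actually states this proposition without proof, describing it as ``straightforward to show,'' and your argument --- independent increments reduce the conditional law of $t_j$ to the probability of zero counts on $(t_{j-1},t]$, giving the survival function $\exp(-\Lambda(t_{j-1},t \mid \boldsymbol{\eta}^{\sigma}))$, which you then differentiate --- is exactly the standard derivation the authors are implicitly invoking.
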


\begin{proposition}
\begin{equation*}
\Lambda(0,T \mid \boldsymbol{\eta}^{\sigma})  = \sum_{i=0}^n \tilde{\Lambda}_i^{\sigma,l}(0,T).
\end{equation*}
\end{proposition}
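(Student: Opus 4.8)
The plan is to reduce the claimed identity to the single fact that the per-arrival outcome probabilities, including the no-purchase option, sum to one at every instant $t$. Recalling that $\tilde{\Lambda}_i^{\sigma,l}(0,T) = \int_0^T \tilde{\lambda}_i^{\sigma,l}(t)\,dt = \int_0^T \lambda(t \mid \boldsymbol{\eta}^{\sigma}) \pi_i(t)\,dt$, where for $i=0$ the term uses the no-purchase probability $\pi_0(t) = \sum_{k=1}^K \theta_k^{\sigma} f_0(s(t),\boldsymbol{\phi}^k,\tau^k)$, I would first sum over $i=0,\ldots,n$ and interchange the finite sum with the integral to obtain $\sum_{i=0}^n \tilde{\Lambda}_i^{\sigma,l}(0,T) = \int_0^T \lambda(t \mid \boldsymbol{\eta}^{\sigma}) \left( \sum_{i=0}^n \pi_i(t) \right) dt$. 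The interchange is immediate since the sum is finite.

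The key step is then to verify that $\sum_{i=0}^n \pi_i(t) = 1$ for every $t$. Expanding the definition of $\pi_i$ and exchanging the order of the two finite sums gives $\sum_{i=0}^n \pi_i(t) = \sum_{k=1}^K \theta_k^{\sigma} \sum_{i=0}^n f_i(s(t),\boldsymbol{\phi}^k,\tau^k)$. By the definition of the no-purchase probability $f_0 = 1 - \sum_{i=1}^n f_i$, the inner sum equals one for each segment $k$, so the expression collapses to $\sum_{k=1}^K \theta_k^{\sigma}$, which is one because $\boldsymbol{\theta}^{\sigma}$ is a probability vector over segments.

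Substituting back leaves $\sum_{i=0}^n \tilde{\Lambda}_i^{\sigma,l}(0,T) = \int_0^T \lambda(t \mid \boldsymbol{\eta}^{\sigma})\,dt = \Lambda(0,T \mid \boldsymbol{\eta}^{\sigma})$, as claimed. Conceptually this is a thinning identity: the arrival NHPP is split into $n+1$ disjoint categories (purchase of each item, or no-purchase) whose instantaneous rates necessarily recombine to the original intensity. I do not anticipate a genuine obstacle here; the only points requiring care are including the $i=0$ term in the sum so that the category probabilities are complete, and keeping the two normalizations ($\sum_{i=0}^n f_i = 1$ per segment and $\sum_{k=1}^K \theta_k^{\sigma}=1$) distinct. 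Since the claim holds pointwise in $t$ and does not rely on the specific form of the stock function or of any particular choice model, integrability of $\lambda$ is all that is needed to justify the final integral.
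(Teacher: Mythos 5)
Your proof is correct and is precisely the ``straightforward'' argument the paper alludes to but does not write out: the paper only asserts the proposition, and its later use of $\tilde{\lambda}_0^{\sigma,l}$ and $\pi_0$ in the proof of Theorem~1 confirms that the intended justification is exactly your pointwise identity $\sum_{i=0}^n \pi_i(t) = 1$, which follows from $f_0 = 1 - \sum_{i=1}^n f_i$ together with $\sum_{k=1}^K \theta^{\sigma}_k = 1$. No gaps; the thinning interpretation you give is also the right conceptual reading.
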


\begin{proof}[of Theorem 1]
We consider the complete arrivals $\tilde{\boldsymbol{t}}^{\sigma,l}$, which include both the observed arrivals $\boldsymbol{t}^{\sigma,l}$ as well as the unobserved arrivals that left as no-purchase, which we here denote $\boldsymbol{t}_0^{\sigma,l} = \left\{\boldsymbol{t}_{0,j}^{\sigma,l}\right\}_{j=1}^{m^{\sigma,l}_0}$. We define an indicator $\tilde{I}^{\sigma,l}_j$ equal to $i$ if the customer at time $\tilde{t}^{\sigma,l}_j$ purchased item $i$, or $0$ if this customer left as no-purchase. For store $\sigma$ and time period $l$, 
\begin{align*}
p(\boldsymbol{t}^{\sigma,l}_0,&\boldsymbol{t}^{\sigma,l} \mid \boldsymbol{\eta}^{\sigma}, \boldsymbol{\theta}^{\sigma}, \boldsymbol{\phi}, \boldsymbol{\tau}, \boldsymbol{N}, T)\\
&=  \mathbb{P}\left(\textrm{no arrivals in }\left(\tilde{t}^{\sigma,l}_{\tilde{m}^{\sigma,l}},T\right] \mid \boldsymbol{\tilde{t}^{\sigma,l}}, \boldsymbol{\eta}^{\sigma}\right) \\
&\quad \times
\prod_{j=1}^{\tilde{m}^{\sigma,l}} p(\tilde{t}^{\sigma,l}_j \mid \tilde{t}^{\sigma,l}_{<j}, \boldsymbol{\eta}^{\sigma}) p(\tilde{I}^{\sigma,l}_j \mid \tilde{t}^{\sigma,l}_{<j}, \boldsymbol{\theta}^{\sigma}, \boldsymbol{\phi}, \boldsymbol{\tau}, \boldsymbol{N})\\
&= \exp(-\Lambda(\tilde{t}_{\tilde{m}^{\sigma,l}},T \mid \boldsymbol{\eta}^{\sigma})) \\
&\quad \times
\lambda(\tilde{t}^{\sigma,l}_1 \mid \boldsymbol{\eta}^{\sigma})\exp(-\Lambda(0,\tilde{t}^{\sigma,l}_1 \mid \boldsymbol{\eta}^{\sigma})) \pi_{\tilde{I}^{\sigma,l}_1}(\tilde{t}^{\sigma,l}_1)\\
&\quad \times
\prod_{j=2}^{\tilde{m}^{\sigma,l}} \lambda(\tilde{t}^{\sigma,l}_j \mid \boldsymbol{\eta}^{\sigma})\exp(-\Lambda(\tilde{t}^{\sigma,l}_{j-1},\tilde{t}^{\sigma,l}_j \mid \boldsymbol{\eta}^{\sigma})) \pi_{\tilde{I}^{\sigma,l}_j}(\tilde{t}^{\sigma,l}_j)\\
&= \exp(-\Lambda(0,T \mid \boldsymbol{\eta}^{\sigma})) \prod_{i=0}^n \prod_{j: \tilde{I}^{\sigma,l}_j = i} \lambda(\tilde{t}^{\sigma,l}_j \mid \boldsymbol{\eta}^{\sigma}) \pi_{i}(\tilde{t}^{\sigma,l}_j )\\
&= \exp(-\Lambda(0,T \mid \boldsymbol{\eta}^{\sigma})) \prod_{i=0}^n \prod_{j=1}^{m^{\sigma,l}_i} \lambda(t^{\sigma,l}_{i,j} \mid \boldsymbol{\eta}^{\sigma}) \pi_{i}(t^{\sigma,l}_{i,j})\\
&= \left(\exp(-\tilde{\Lambda}_0^{\sigma,l}(0,T)) \prod_{j=1}^{m^{\sigma,l}_0} \tilde{\lambda}_0^{\sigma,l}(t^{\sigma,l}_{0,j})  \right) \\
&\quad \times
\left(  \prod_{i=1}^n \exp(-\tilde{\Lambda}_i^{\sigma,l}(0,T)) \prod_{j=1}^{m^{\sigma,l}_i} \tilde{\lambda}_i^{\sigma,l}(t^{\sigma,l}_{i,j}) \right).
\end{align*}
We have then that
\begin{align*}
p( \boldsymbol{t}^{\sigma,l} \mid \boldsymbol{\eta}^{\sigma},\boldsymbol{\theta}^{\sigma},& \boldsymbol{\phi}, \boldsymbol{\tau}, \boldsymbol{N}, T) \\
&= \int p(\boldsymbol{t}^{\sigma,l}_0, \boldsymbol{t}^{\sigma,l} \mid \boldsymbol{\eta}^{\sigma}, \boldsymbol{\theta}^{\sigma}, \boldsymbol{\phi}, \boldsymbol{\tau}, \boldsymbol{N}, T) d\boldsymbol{t}_0^{\sigma,l}\\
&=  \left( \int \exp(-\tilde{\Lambda}_0^{\sigma,l}(0,T)) \prod_{j=1}^{m^{\sigma,l}_0} \tilde{\lambda}_0^{\sigma,l}(t^{\sigma,l}_{0,j})  d\boldsymbol{t}_0^{\sigma,l} \right) \\
&\quad\times
\left(  \prod_{i=1}^n \exp(-\tilde{\Lambda}_i^{\sigma,l}(0,T)) \prod_{j=1}^{m^{\sigma,l}_i} \tilde{\lambda}_i^{\sigma,l}(t^{\sigma,l}_{i,j})  \right)\\
&=  \prod_{i=1}^n \exp(-\tilde{\Lambda}_i^{\sigma,l}(0,T)) \prod_{j=1}^{m^{\sigma,l}_i} \tilde{\lambda}_i^{\sigma,l}(t^{\sigma,l}_{i,j}),
\end{align*}
since the last integrand is exactly the joint density for the arrivals from an NHPP with rate $\tilde{\lambda}_0^{\sigma,l}(t)$, and so integrates to $1$.

Given the model parameters, data are generated independently for each $\sigma$ and $l$, thus
\begin{align*}
\log p(\boldsymbol{t} \mid & \boldsymbol{\eta},\boldsymbol{\theta}, \boldsymbol{\phi}, \boldsymbol{\tau}, \boldsymbol{N}, T) \\
&= \sum_{\sigma=1}^S \sum_{l=1}^{L^{\sigma}} \log p(\boldsymbol{t}^{\sigma,l} \mid \boldsymbol{\eta}^{\sigma},\boldsymbol{\theta}^{\sigma}, \boldsymbol{\phi}, \boldsymbol{\tau}, \boldsymbol{N}, T)\\
&= \sum_{\sigma=1}^S \sum_{l=1}^{L^{\sigma}} \sum_{i=1}^n \left( \sum_{j=1}^{m^{\sigma,l}_i} \log \left(\tilde{\lambda}^{\sigma,l}_i(t^{\sigma,l}_{i,j}) \right)
- \tilde{\Lambda}^{\sigma,l}_i(0,T) \right).
\end{align*}
\end{proof}

We now show how $\tilde{\Lambda}^{\sigma,l}_i(0,T)$ can be expressed analytically in terms of $\Lambda(0,T \mid \boldsymbol{\eta}^{\sigma})$. For convenience, in this section we suppress in the notation the dependence of the stock on past arrivals and initial stock levels and will write $s(t \mid \boldsymbol{t}^{\sigma,l}, \boldsymbol{N}^{\sigma,l})$ as simply $s(t)$. We consider each of the time intervals where the stock $s(t)$ is constant. Let the sequence of times $q^{\sigma,l}_1,\ldots,q^{\sigma,l}_{Q^{\sigma,l}}$ demarcate the intervals of constant stock. That is, $[0,T] = \bigcup_{r=1}^{Q^{\sigma,l}-1} [q^{\sigma,l}_r,q^{\sigma,l}_{r+1}]$ and $s(t)$ is constant for $t\in[q^{\sigma,l}_r,q^{\sigma,l}_{r+1})$ for $r=1,\ldots,Q^{\sigma,l}-1$. Then,
\begin{align*}
\tilde{\Lambda}^{\sigma,l}_i&(0,T) \\
&= \int_{0}^T \tilde{\lambda}^{\sigma,l}_i(t) dt\\
&= \int_{0}^T  \lambda(t \mid \boldsymbol{\eta}^{\sigma}) \sum_{k=1}^K \theta^{\sigma}_k f_i(s(t),\boldsymbol{\phi}^k,\tau^k) dt\\
&= \sum_{r=1}^{Q^{\sigma,l}-1} \left( \int_{q^{\sigma,l}_r}^{q^{\sigma,l}_{r+1}} \lambda(t \mid \boldsymbol{\eta}^{\sigma}) \sum_{k=1}^K \theta^{\sigma}_k f_i(s(q^{\sigma,l}_r),\boldsymbol{\phi}^k,\tau^k) dt \right)\\
&= \sum_{r=1}^{Q^{\sigma,l}-1} \left(\sum_{k=1}^K \theta^{\sigma}_k f_i(s(q^{\sigma,l}_r ),\boldsymbol{\phi}^k,\tau^k) \right)\Lambda(q^{\sigma,l}_r,q^{\sigma,l}_{r+1} \mid \boldsymbol{\eta}^{\sigma}).
\end{align*}
With this formula, the likelihood function can be computed for any parameterization $\lambda(t \mid \boldsymbol{\eta}^{\sigma})$ desired so long as it is integrable.

\end{document}